\newlength{\tempdim}
 \tikzset{
  big arrow/.style={
    decoration={markings,mark=at position 1 with {\arrow[scale=2,#1]{>}}},
    postaction={decorate},
    shorten >=0.4pt},
  big arrow/.default=black}
  \tikzset{
  dash arrow/.style={
    decoration={markings,mark=at position 1 with {\arrow[scale=1.2,#1]{>}}},
    postaction={decorate},
    shorten >=0.4pt},
  dash arrow/.default=black}
\theoremstyle{plain}
\newtheorem{lemma}{Lemma}[section]
\newtheorem{corollary}{Corollary}[section]
\newtheorem{mydef}{Definition}
\newcommand{\LR}{\textrm{LR}}
\begin{document}

\setattribute{journal}{name}{}

\begin{frontmatter}

\title{Nonparametric Bayesian approach to LR assessment in case of rare \textrm{type} match}
\runtitle{Nonparametric Bayesian approach to LR assessment}

\begin{aug}
  \author{\fnms{Giulia}  \snm{Cereda}\corref{}\ead[label=e1]{giulia.cereda@unil.ch}}


  \runauthor{Giulia Cereda}

  \affiliation{University of Lausanne}
  \and
  \affiliation{Leiden University}
  \address{University of Lausanne, Ecole des sciences criminelles, \\ Facult\'e de droit, des sciences criminelles et d'administration publique\\
1015, Lausanne-Dorigny, Switzerland,\\ 
          \printead{e1}}

\end{aug}
\begin{abstract}

The evaluation of a match between the DNA profile of a stain found on a crime scene and that of a suspect (previously identified) involves the use of the unknown parameter $\mathbf{p}=(p_1, p_2, ...)$, (the ordered vector which represents the frequencies of the different DNA profiles in the population of potential donors) and the names of the different DNA types.
We propose a Bayesian nonparametric method which models $\mathbf{p}$ through a random variable $\mathbf{P}$ distributed according to the two-parameter Poisson Dirichlet distribution, and discards the information about the names of the different DNA types.
The ultimate goal of this model is to evaluate the so-called `probative value' of DNA matches in the rare type case, that is the situation in which the suspect's profile, matching the crime stain profile, is not in the database of reference.
\end{abstract}
\end{frontmatter}

\section{Introduction}

The largely accepted method for evaluating how much some available data $\mathcal{D}$ (typically forensic evidence) is helpful in discriminating between two hypotheses of interest (the prosecution hypothesis $H_p$ and the defense hypothesis $H_d$), is the calculation of the \emph{likelihood ratio} (LR), a statistic that expresses the relative plausibility of the data under these hypotheses, defined as
\begin{equation}
\label{eqa}
\LR=\frac{\Pr(\mathcal{D}|H_p)}{\Pr(\mathcal{D}|H_d)}.
\end{equation}

Widely considered the most appropriate framework to report a measure of the `probative value' of the evidence regarding the two hypotheses \citep{robertson:1995, evett:1998, aitken:2004,balding:2005}, it indicates the extent to which data is in favor of one hypothesis over the other. 
Forensic literature presents many approaches to calculate the LR, mostly divided into Bayesian and frequentist methods (see \citet{cereda:2015b} for a careful differentiation between these two approaches). 

This paper proposes a Bayesian nonparametric method for the LR assessment in the rare type match case, 
the challenging situation in which there is a match between some characteristic of the recovered material and of the control material, but this characteristic has not been observed yet in previously collected samples (i.e. database of reference). 
This constitutes a problem because the LR value depends on the proportion of the matching characteristic in a reference population, and this proportion is, in standard practice, estimated using the relative frequency of the characteristic in the available database.
In particular, we will focus on Y-STR DNA profile matches, for which the rare type match problem is often recurring \citep{cereda:2015b}. In this case data to evaluate is made of the information about the matching profile and of the list of DNA profile in the database.

The use of a Bayesian nonparametric method is justified by the fact that the parameter of the model is the infinite dimensional vector $\mathbf{p}$, made of the (unknown) sorted population proportions of all possible Y-STR profiles, assumed to be infinitely many. As prior over $\mathbf{p}$ we choose the two parameter Poisson Dirichlet distribution, and treat its parameters as hyperparameters, hence provided with a hyperprior.
Moreover, we will discard the information contained in the names of the profiles, and this will lead to a reduction of the data $\mathcal{D}$ to  a smaller amount of information $D$.
The reduction of the data can be a wise practice in presence of many nuisance parameters as explained in \citet{cereda:2015b}, and  sometimes the likelihood ratio based on the data reduction is much more precisely estimated than the likelihood ratio based on all data.

The paper is structured in the following way: Section~\ref{bnpm} introduces the notation, the assumptions of our model and the prior distribution chosen for parameter $\mathbf{p}$. Section~\ref{m} displays the model, via Bayesian network representation, along with some theory on random partitions useful to define a clever and compact representation of the reduced data $D$. 
An alternative representation of the same model via the Chinese restaurant process is also described.
Section~\ref{gm} introduces relevant known results regarding the two parameter Poisson Dirichlet distribution, along with a new lemma, which can be used for all the situations in which prosecution and defense agree on the distribution of part of the data and disagree on the distribution of the rest, given the parameter(s). This result will allow to derive the LR in a very elegant way (Section~\ref{lr}).
Section~\ref{ard} displays some experiments of application of this model on a real database of Y-STR profiles, such as model fitting, asymptotic power law behavior, study of the loglikelihood function, and comparison with the LR values obtained in the ideal situation in which vector $\mathbf{p}$ is known. Lastly, Section~\ref{frq} proposes questions for future research.
\section{A Bayesian nonparametric model for the rare type match}\label{bnpm}

\subsection{The rare type match problem} \label{rare}

In order to evaluate the match between the profile of a particular piece of evidence and a suspect's profile, it is necessary to estimate the proportion of that profile in the population of potential perpetrators. Indeed, it is intuitive that the rarer the matching profile, the more the suspect is in trouble. 
Problems arise when the observed frequency of the profile in a sample from the population of interest (i.e., in a reference database) is 0. Such characteristic is likely to be rare, but it is challenging to quantify how rare it is. The rare type problem is particularly important in case a new kind of forensic evidence, such as results from DIP-STR markers (see for instance \citet{cereda:2014b}) is involved, and for which the available database size is still limited.
The same happens when Y-chromosome
(or mitochondrial) DNA profiles are used since the set of possible Y-STR profiles is extremely
large. As a consequence, most of the Y-STR haplotypes are not represented in the database.
The Y-STR marker system will thus be retained here as an extreme but in practice common and important way in
which the problem of assessing evidential value of rare type match can arise. This problem is so substantial that it has been defined ``the fundamental problem of forensic mathematics'' \citep{brenner:2010}.

The \emph{empirical frequency estimator}, also called \emph{naive estimator}, that uses the frequency of the characteristic in the database, puts unit probability mass on the set of already observed characteristics, and it is thus unprepared for the observation of a new type. 
A solution could be the \emph{add-constant} estimators (in particular the well known \emph{add-one} estimator, due to \citet{laplace:1814}, and the \emph{add-half} estimator of \citet{krichevsky:1981}), which add a constant to the count of each type, included the unseen ones. However, this method requires to know the number of possible unseen types, and it is also not very performing when this number is large compared to the sample size (see \citet{gale:1994} for additional discussion). 
Alternatively, \citet{good:1953}, based on an intuition on A.M. Turing, proposed the \emph{Good Turing estimator} for the total unobserved probability mass, based on the proportion of singleton observations in the sample. An extension of this estimator is applied to the LR assessment in the rare type match in \citet{cereda:2015b}. 
For a comparison between \emph{add one} and \emph{Good-Turing} estimator, see \citet{orlitsky:2003}.
As pointed out in \citet{anevski:2013}, the \emph{naive estimator}, and the \emph{Good Turing estimator} are
in some sense complementary: the first gives a good estimate for the observed types, and the second for the probability mass of the unobserved ones. 
More recently, \citet{orlitsky:2004} have introduced the \emph{high profile estimator}, which extends the tail of the \emph{naive estimator} to the region of unobserved types. \citet{anevski:2013} improved this estimator and provided the consistency proof. 
Papers that address the rare Y-STR haplotype problem in forensic context are for instance \citet{egeland:2008}, \citet{brenner:2010}, and \citet{cereda:2015}, which applied classical Bayesian approach (the beta binomial and the Dirichlet multinomial problem) to the LR assessment in the rare haplotype case. Moreover, the Discrete Laplace method presented in \citet{andersen:2013b}, even though not specifically designed for the rare type case can be successfully applied to that extent \citet{cereda:2015b}.

Bayesian nonparametric estimators for the probability of observing a new type have been proposed by \citet{tiwari:1989}, using Dirichlet process, by \citet{lijoi:2007} using general Gibbs prior, and by \citet{favaro:2009} with specific interest to the two parameter Poisson Dirichlet prior. 
However, for the LR assessment it is required not only the probability of observing a new species but also the probability of observing this same species twice (according to the defense the crime stain profile and the suspect profile are two independent observations), and to our knowledge, this paper is the first one to address the problem of LR assessment in the rare haplotype case using Bayesian nonparametric models. As prior we will use the Poisson Dirichlet distribution, which is proving useful in many discrete domain, in particular language modelling \citep{teh:2006}. In addition, it shows a power law behaviour which describe a incredible variety of phenomena \citep{newman:2005}.

 \subsection{Notation}
Throughout the paper the following notation is chosen: random variables and their values are denoted, respectively, with uppercase and lowercase characters: $x$ is a realization of $X$. Random vectors and their values are denoted, respectively, by uppercase and lowercase bold characters: $\mathbf{p}$ is a realization of the random vector $\mathbf{P}$. Probability is denoted with $\Pr(\cdot)$, while density of a continuous random variable $X$ is denoted alternatively by $p_{X}(x)$ or by $p(x)$ when the subset is clear from the context. For a discrete random variable $Y$, the density notation $p_Y(y)$ and the discrete one $\Pr(Y=y)$ will be alternately used.
Moreover, we will use shorthand notation like $p(y \mid x)$ to stand for the probability density of Y with respect to the conditional distribution of $Y$ given $X = x$.

Notice that when using the notation in~\eqref{eqa}, $\mathcal{D}$ is regarded as events. However, later in the paper, it will be regarded as a random variables. In that case, the following notation will thus be preferred:
\begin{equation}
\label{eq21rr}
\LR=\frac{\Pr(\mathcal{D}=d|H_p)}{\Pr(\mathcal{D}=d|H_d)}\quad \text{or} \quad \frac{p(d|H_p)}{p(d|H_d)}.
\end{equation}

Lastly, notice that ``DNA types'' is used throughout the paper as a general formula to indicate Y-STR profiles.
\subsection{Model assumptions}\label{m-a}

Our model is based on the two following assumptions:

\begin{description}
\item[Assumption 1] There are infinitely many DNA types in Nature.
\end{description}
The reason for this assumption is that there are so many possible DNA types that they can be considered infinite. This assumption, already used by e.g.\ \citet{kimura:1964} in the `infinite alleles model', allows to use Bayesian nonparametric methods and avoids the problem of specifying how many different types there are in Nature.
\begin{description}

\item[Assumption 2] The names of the different DNA types do not contain information.

\end{description}
The specific sequence of numbers that forms a DNA profile carries information: if two profiles show few differences this means that they are separated by few mutation drifts, hence the profiles share a relatively recent common ancestor. However, this information is difficult to exploit and may be not so relevant for the LR assessment. 
This is the reason why we will treat DNA types as ``colors'', and only consider the repartition into different categories. Stated otherwise, we put no topological structure on the space of the DNA types. 

Notice that this assumption makes the model a priori suitable for any characteristic which shows many different possible types, thus what written still holds, in principle, also replacing `DNA types' with any other type. However, we will only test the model with Y-STR data.

\subsection{Prior}

In Bayesian statistics, parameter(s) of interest are modeled through random variables.
The (prior) distribution over the parameter(s) should represent the uncertainty about its (their) value(s).

LR assessment for the rare type match involves two unknown parameters of interest: one is $h \in \{H_p, H_d\}$, representing the unknown true hypothesis, the other is $\boldsymbol{p}$, the vector of the unknown population frequencies of all DNA profiles in the population of potential perpetrators.
The dichotomous random variable $H$ is used to model parameter $h$, and the posterior distribution of this random variable, given data, is the ultimate aim of a forensic inquiry.
In a similar way, random variable $\boldsymbol{P}$ can be used to model $\boldsymbol{p}$. 
Because of Assumption~1, $\boldsymbol{p}$ is an infinite dimensional parameter, hence the need of Bayesian nonparametric methods \citep{hjort:2010}.
In particular $\boldsymbol{p}= (p_t | t \in T)$, with $T$ a countable set of indexes, $p_{t}>0$, and $\sum_{t}p_t=1$.
Moreover, because of Assumption~2, data will be reduced to random partitions, as explained in Section~\ref{partitions}, and it will turn out that the distribution of these partitions does not depend on the order of the $p_i$. Hence, we can force the parameter $\mathbf{p}$ to have values in $\nabla_{\infty}=\{(p_1, p_2, ...) | p_1\geq p_2 \geq ..., \sum p_i=1, p_i>0\}$, the ordered infinite dimensional simplex.
The ordered random vector $\mathbf{p}$ describes an infinite population randomly partitioned into DNA types. The randomness is described by the prior distribution over $\mathbf{p}$, for which we choose the two-parameter Poisson Dirichlet distribution \citep{ pitman:1997,  feng:2010,  buntine:2012, carlton:1999, pitman:2006}, defined in the following way: 
\begin{mydef}[two parameter GEM distribution]
Given $\alpha$ and $\theta$ satisfying the following conditions:
\begin{equation}\label{condition}
0\leq \alpha<1, \text{ and } \theta>-\alpha.
\end{equation}
the vector $\bm{W}=(W_1,W_2,...)$ is said to be distributed according to the \emph{GEM($\alpha, \theta$)}, if $$\forall i\quad W_i=V_i\prod_{j=1}^{i-1} (1-V_j),$$ where $V_1$, $V_2$,... are independent random variables distributed according to $$V_i\sim \text{B}(1-\alpha, \theta+ i\alpha).$$ 
It holds that $W_i >0$, and $\sum_{i}W_i=1.$ 
\end{mydef}
The GEM distribution (short for Griffin - Engen - McCloskey distribution') is well known in literature as the ``stick breaking prior'', since it measures the random sizes in which a stick is broken iteratively. 
This distribution is invariant under size biased permutation \citep{engen:1975}, the random permutation defined by sampling from the population and assigning to each type a label, based on the order in which it is first sampled.

\begin{mydef}[Two parameter Poisson Dirichlet distribution]\label{pd}
Given $\alpha$ and $\theta$ satisfying condition~\eqref{condition}, and a vector $\bm{W}=(W_1,W_2,...) \sim \text{GEM}(\alpha, \theta)$, the random vector $\boldsymbol{P}=(P_1, P_2, ...)$ obtained by ordering $\bm{W}$, such that $p_i\geq p_{i+1}$, is said to be \emph{Poisson Dirichlet distributed} PD$(\alpha, \theta)$. Parameter $\alpha$ is called \emph{discount parameter}, while $\theta$ is the \emph{concentration parameter.}

\end{mydef}

Notice that the vector $\bm{P}$ is obtained by sorting the vector $\bm{W}$ in non increasing order, while the vector $\bm{W}$ can be obtained by the so-called \emph{size biased permutation} of the indexes of $\bm{P}$ \citep{perman:1992, pitman:1997}.

The two parameter Poisson Dirichlet distribution PD$(\alpha, \theta)$ is the generalization of the well known Poisson Dirichlet distribution with a single parameter $\theta$ introduced by \citet{kingman:1975}, which is the representation measure \citep{kingman:1977, kingman:1978} of the celebrated \emph{Ewens sampling formula} \citep{ewens:1972}, widely applied in genetics \citep{karlin:1972, kingman:1980}.
For our model we will not allow $\alpha=0$, hence we will assume $0<\alpha<1$. 

It is worth mentioning that an alternative choice for the parameters space is $\alpha<0$, $\theta=-m\alpha$ for some $m\in\mathbb{N}$ \citep{pitman:1996, gnedin:2006, gnedin:2010, cerquetti:2010}. It corresponds to a model with finitely many ($m$) DNA types, where the prior over $\mathbf{P}=(P_1, ..., P_m)$ is Dirichlet with $m$ parameters equal to $-\alpha$.

Lastly, we point out that, in practice, we cannot assume to know parameters $\alpha$ and $\theta$: this is why we will treat them as hyperparameters on which we will put an hyperprior.

 \section{The model}\label{m}

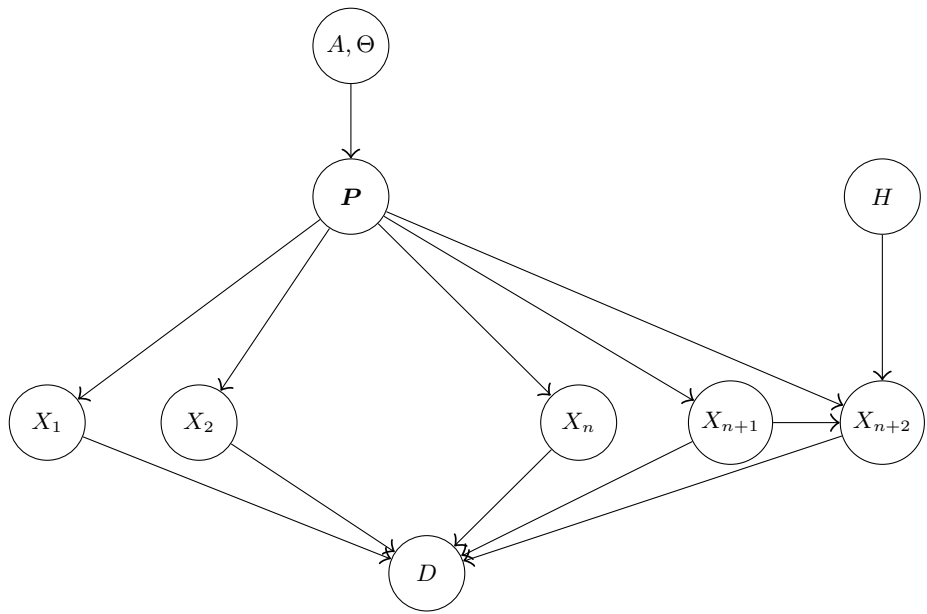
\begin{figure}[htbp]
\begin{center}
\begin{tikzpicture} 
\node[draw, circle, minimum size=1cm]                (t) at (-4,4)  { $A, \Theta$};
\node [draw, circle, minimum size=1cm]              (h) at (3,2) { $H$};
\node [draw, circle, minimum size=1cm]              (p) at (-4,2) { $\bm{P}$};
\node [draw, circle, minimum size=1cm]              (y1) at (-8, -1) { $X_1$};
\node [draw, circle, minimum size=1cm]              (y2) at (-6,-1) { $X_2$};
\node [draw, circle, minimum size=1cm]              (yn) at (-1,-1) { $X_n$};
\node [draw, circle, minimum size=1cm]              (yn33) at (1,-1) { $X_{n+1}$};
\node [draw, circle, minimum size=1cm]              (yn34) at (3,-1) { $X_{n+2}$};
\node [draw, circle, minimum size=1cm]              (d) at (-3,-3) { $D$};

   \draw[black, big arrow] (t) -- (p);
   \draw[black, big arrow] (p) -- (y1); 
   \draw[black, big arrow] (p) -- (y2); 
   \draw[black, big arrow] (p) -- (yn); 
      \draw[black, big arrow] (p) -- (yn33); 
    \draw[black, big arrow] (p) -- (yn34); 
    \draw[black, big arrow] (h) -- (yn34); 
    \draw[black, big arrow] (yn33) -- (yn34); 
  \draw[black, big arrow] (y1) -- (d); 
    \draw[black, big arrow] (y2) -- (d); 
      \draw[black, big arrow] (yn) -- (d); 
        \draw[black, big arrow] (yn33) -- (d); 
  \draw[black, big arrow] (yn34) -- (d);

\end{tikzpicture}
\caption{Bayesian network to show the conditional dependencies of the relevant random variables in our model.}
\label{berhnt}
\end{center}
\end{figure}

The Bayesian network of Figure~\ref{berhnt} encapsulates the conditional dependencies of the variables of the proposed model.
They are defined through random variables defined as follows:

\begin{itemize}

\item $H$ is a dichotomous random variable that represents the hypotheses of interest and can take values $h \in \{H_p,H_d\}$, according to the prosecution or the defense, respectively. A uniform prior on the hypotheses is chosen:
$$\Pr(H=h)\propto 1 \quad \textrm{for} \ h=\{H_p, H_d\}.$$

\item ($A, \Theta$) is the random vector that represents the hyperparameters $\alpha$ and $ \theta$, satisfying condition~\eqref{condition}. The joint distribution of these two parameters (hyperprior) will be generically denoted as $p(\alpha, \theta)$:
$$(A, \Theta) \sim p(\alpha, \theta).$$

\item The random vector $\bm{P}$ with values in $\nabla_{\infty}$, represents the ranked population frequencies. $\bm{P}=\bm{p}=(p_1, p_2, ...)$ means that $p_1$ is the frequency of the most common DNA type in the population, $p_2$ is the frequency of the second most common DNA type, and so on. As a prior for $\bm{P}$ we use the two-parameter Poisson Dirichlet distribution as in Definition~\ref{pd}:
$$\bm{P}| (A, \Theta) = (\alpha, \theta) \sim PD(\alpha, \theta).$$

\item Integer valued random variables $X_1$, ..., $X_n$ represent
the ranks of the population proportions of the DNA types of the individuals in the database (after some arbitrary ordering for profiles in the database is chosen). For instance, $X_3 = 5$ means that the third individual in the database has the fifth most common DNA type in the population.
Since $\bm{p}$ is unknown these random variables cannot be observed. 
Given $\bm{p}$ they are an i.i.d. sample from $\bm{p}$:
\begin{equation} \label{eqx}
X_1, X_2, ..., X_n | \bm{P}=\bm{p} \sim_{i.i.d.} \bm{p}.
\end{equation}

\item $X_{n+1}$ represents the rank of the suspect's DNA type. It is again a draw from $\bm{p}$.

$$ X_{n+1} | \bm{P}=\bm{p} \sim  \bm{p}.$$

\item $X_{n+2}$ represents the rank of the crime stain's DNA type. According to the prosecution, given $X_{n+1}$, this random variable is deterministic (it is equal to $x_{n+1}$ with probability 1). According to the defense it is another sample from $\bm{p}$:

$$X_{n+2} |  \bm{P}=\bm{p} , X_{n+1}=x_{n+1}, H=h \sim 
\begin{cases} 
\delta_{x_{n+1}} & \text{if } h=H_p\\
\bm{p} &  \text{if } h=H_d
\end{cases}.
$$
\end{itemize}

As already mentioned, $X_1, ..., X_{n+2}$ can not be observed. They represent the database ranked according to the unknown rank in $\boldsymbol{p}$ and constitute an intermediate layer that helps in expressing the data in terms of observable partitions. Section~\ref{partitions} recalls some notions about random partitions, useful before defining node $D$, representing the `reduced' data we want to evaluate.

\subsection{Random partitions}\label{partitions}

A \emph{partition of a set $A$} is an unordered collection of nonempty and disjoint subsets of $A$ the union of which forms $A$. 
Particularly interesting for our model are partitions of the set $A=[n]=\{1, ..., n\}$, denoted as $\pi_{[n]}$. The set of all partitions of $[n]$ will be denoted as $\mathcal{P}_{[n]}$. Random partitions of $[n]$ will be denoted as $\Pi_{[n]}$. In addition, a \emph{partition of $n$} is a finite non increasing sequence of positive integers that sum up to $n$. Partitions of $n$ will be denoted as $\pi_n$. 

%
%

Given a sequence of integer valued random variables $X_1, ..., X_n$, let $\Pi_{[n]}(X_1, X_2, ..., X_n)$ be the random partition defined by the equivalence classes of their indexes using the random equivalence relation $i \sim j$ iff $X_i=X_j$.
This construction allows to build a map from the set of values of $X_1, ..., X_n$ to the set of the partitions of $[n]$ as in the following example ($n=10$):
\begin{align*}
\mathbb{N}^{10} &\rightarrow  \mathcal{P}_{[10]}  \\
X_1, ..., X_{10}  &\longmapsto  \Pi_{[10]}(X_1, X_2, ..., X_{10}) \\
(3, 1, 3, 1, 2, 2, 6, 9, 4, 1)&\longmapsto \{ \{1,3\}, \{2, 4, 10 \}, \{5, 6\}, \{7\}, \{8\}, \{9\} \} \end{align*}

Typical data to evaluate in case of a match is $\mathcal{D}=(E, B)$, where $E=(E_s, E_t)$, and
\begin{itemize}
\item  $B$ = the database of size $n$, which contains a sample of DNA types, indexed by $i= 1, ..., n$, from the population of possible perpetrators,
\item $E_s$ = suspect's DNA type,
\item $E_t$ = crime stain's DNA type (matching with the suspect's type).
\end{itemize}

In agreement with Assumption~2, we can consider the reduction of data which ignores information about the names of the DNA types: this is achieved, for instance, by retaining only the equivalence classes of the relation ``to have the same DNA type''.   

The database can thus be reduced to the partition of $[n]$, denoted $\pi_{[n]}^{\text{Db}}$, and obtained using the equivalence classes of the indexes. 
Notice that the same partition is obtained via random variables $X_1, ..., X_n$, as defined in \eqref{eqx}.
Stated otherwise, we can reduce $B$ to $\pi_{[n]}^{\text{Db}}$, the partition of $[n]$ obtained from the database.
However, data is actually made of the background data along with the evidence, two new observations that match. 
In a similar way, when the suspect profile is considered we obtain the partition $\pi^{\text{Db}+}_{[n+1]}$, where the first $n$ integers are partitioned as in $\pi^{\text{Db}}_{[n]}$, and $n+1$ constitutes a single subset (at least in the rare type match case). 

When the crime stain profile is considered we obtain the partition $\pi^{\text{Db}++}_{[n+2]}$ where the first $n$ integers are partitioned as in  $\pi^{\text{Db}}_{[n]}$, and $n+1$ and $n+2$ belongs to the same (new) subset.

Random variables $\Pi^{\text{Db}}_{[n]}$, $\Pi^{\text{Db}+}_{[n+1]}$, and $\Pi^{\text{Db}++}_{[n+2]}$ are used to model $\pi_{[n]}^{\text{Db}}$, $\pi_{[n+1]}^{\text{Db}+}$, and $\pi_{[n+2]}^{\text{Db}++}$, respectively.

Notice that, given $\alpha$ and $\theta$, prosecution and defense agree on the distribution of $\Pi^{\text{Db}+}_{[n+1]}$ but disagree on the distribution of $\Pi^{\text{Db}++}_{[n+2]}$.

It is worth noticing that, by construction, the same random partitions can be defined through random variables $X_1$, ..., $X_{n+2}$:

\begin{align*}
\Pi_{[n]}^{\text{Db}}&= \Pi_{[n]}(X_1, ..., X_n),\\
\Pi_{[n+1]}^{\text{Db}+}&= \Pi_{[n+1]}(X_1, ..., X_{n+1}),\\
\Pi_{[n+2]}^{\text{Db}++}&= \Pi_{[n+2]}(X_1, ..., X_{n+2}).\\
\end{align*}

To clarify, consider the following example of a database (Db) with $k=6$ different DNA types, from $n=10$ individuals: 
$$\text{Db}=(h_1, h_2, h_1, h_2, h_3, h_3, h_4, h_5, h_6,h_2),$$ where $h_i$ is the name of the $i$th DNA type in the order chosen for the database. 
This can be reduced to the partition of $[10]$: $$\pi_{[10]}^{\text{Db}}=\{\{1,3\}, \{2, 4, 10\}, \{5,6\}, \{7\}, \{8\}, \{9\}\}.$$ 
Then, the part of data prosecution and defense agree on is $$\pi_{[11]}^{\text{Db}+}=\{\{1,3\}, \{2, 4, 10\}, \{5,6\}, \{7\}, \{8\}, \{9\},  \{11\} \},$$ while the entire data $D$ can be represented as  $$\pi_{[12]}^{\text{Db}++}=\{\{1,3\}, \{2, 4, 10\}, \{5,6\}, \{7\}, \{8\}, \{9\},  \{11, 12\} \}.$$

Now, assume that $\boldsymbol{p}$ is known, thus we know also that $h_1$ is, for instance, the fourth most frequent type, $h_2$ is the second most frequent type, and so on.
Stated otherwise, we are now able to observe the variables $X_1$, ..., $X_{n+2}$: $X_1= 4$,  $X_2=2$, $X_3=4$, $X_4=2$, $X_5= 3$, $X_6=3$, $X_7=10$, $X_8=13$, $X_9=5$, $X_{10}=2$, $X_{11}=9$, $X_{12}=9$.
It is easy to check  that $\Pi_{[n]}(X_1, ..., X_n)=\pi_{[n]}^{\text{Db}}$, $\Pi_{[n+1]}(X_1, ..., X_{n+1})=\pi_{[n+1]}^{\text{Db+}}$, $\Pi_{[n+2]}(X_1, ..., X_{n+2})=\pi_{[n+2]}^{\text{Db++}}.$

Data $D$ can now be defined as:

\begin{itemize}
\item $D=\pi^{\text{Db}++}_{[n+2]}$, the  partition of $[n+2]$ obtained from the database enlarged with the two new observations. \end{itemize}

Node $D$ of Figure~\ref{berhnt} is defined accordingly. Notice that, given $X_1,..., X_{n+2}$, $D$ is deterministic. A very relevant result is that, according to Proposition 4 in \citet{pitman:1992b} it is possible to describe directly the distribution of $D\mid \alpha, \theta, H$. Hence, we can get rid of the intermediate layer of nodes $X_1$, ..., $X_{n+2}$. In particular, it holds that if $$\bm{P}\mid \alpha, \theta \sim PD(\alpha, \theta),$$ and $$X_1, X_2, ... \mid \bm{P}=\bm{p} \sim_{\text{i.i.d}} \bm{p},$$ then, for all $n$, the random partition $\Pi_{[n]}=\Pi_{[n]}(X_1, ..., X_{n})$ has the following distribution:
\begin{equation}\label{ChEPPF}
\mathbb{P}_{n}^{\alpha,\theta}(\pi_{[n]}):=\Pr(\Pi_{[n]}=\pi_{[n]}| \alpha, \theta)=\frac{[\theta+ \alpha]_{k-1; \alpha}}{[\theta+ 1]_{n-1; 1}}\prod_{i=1}^k[1-\alpha]_{n_i-1;1}, 
\end{equation}
where $n_i$ is the size of the $i$th block of $\pi_{[n]}$ (the blocks are here ordered according to the least element), and $\forall x, b\in \mathbb{R}, a\in \mathbb{N}$, $[x]_{a,b}:=\begin{cases}
\prod_{i=1}^{a-1} (x+ib) &\text{if } a\in \mathbb{N}\backslash \{0\}\\
0 &\text{if } a=0
\end{cases}.$ This formula is also known as the \emph{Pitman sampling formula}, further studied in \citet{pitman:1995}.
The model of Figure~\ref{berhnt} can thus be simplified (see Figure~\ref{bernt2c}).

%

%
%


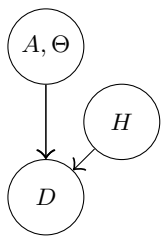
\begin{figure}[htbp]
\begin{center}
\begin{tikzpicture}  \node[draw, circle, minimum size=1cm]                (t) at (-2,0)  { $A, \Theta$};
\node [draw, circle, minimum size=1cm]              (h) at (-1,-1) { $H$};
 \node [draw, circle, minimum size=1cm]              (d) at (-2,-2) { $D$};

 \draw[black, big arrow] (t) -- (d);
 \draw[black, big arrow] (h) -- (d);     

\draw[black, big arrow] (t) -- (d);

\end{tikzpicture}
\caption{Simplified version of the Bayesian network in Figure~\ref{berhnt}}
\label{bernt2c}
\end{center}
\end{figure}
 It holds that $\Pr(D|\alpha, \theta, H_p) = \mathbb{P}_{n+1}^{\alpha, \theta}(\pi_{[n+1]}^{\text{Db+}})$, and $\Pr(D|\alpha, \theta, H_d) = \mathbb{P}_{n+2}^{\alpha, \theta}(\pi_{[n+2]}^{\text{Db++}})$.
 
Notice that for $\alpha=0$ we obtain the Ewens's sampling formula.
\subsection{Chinese Restaurant representation}
There is an alternative characterization of this model, called ``Chinese restaurant process'', due to \citet{aldous:1985} for the one parameter case, and studied in details for the two parameter version in \citet{pitman:2006}. It is defined as follows:
consider a restaurant with infinite tables, each one infinitely large. 
Let $Y_1, Y_2, ...$ be integer valued random variables that represent the seating plan: tables are ranked in order of occupancy, and $Y_i = j$ means that the $i$th customer seats at the $j$th table to be created. 
The process is described by the following transition matrix:
$$Y_1=1$$
\begin{equation} \label{eqch}
\Pr(Y_{n+1} = i | Y_1, ..., Y_n)=\begin{cases} 
{\displaystyle \frac{\theta + k\alpha}{n+\theta} }& \text{if } i = k+1 \\
&\\
{\displaystyle\frac{n_i- \alpha}{n+\theta} }& \text{if } 1\leq i \leq k \\
\end{cases} 
\end{equation}

where $k$ is the number of tables occupied by the first $n$ customers, and $n_i$ is the number of customers that occupy table $i$.

$Y_1, ..., Y_n$ are not i.i.d., nor exchangeable, but it holds that $\Pi_{[n]}(Y_1, ..., Y_n)$ has the same distribution as $\Pi_{[n]}(X_1, ..., X_n)$, with $X_1, ..., X_n$ defined as in \eqref{eqx} (in particular they are distributed according to the Pitman sampling formula~\eqref{ChEPPF}).

Stated otherwise, we can use the seating plan of $n$ customers to obtain the same partition $\pi_{[n]}^{\text{Db}}$ built through the database (or by partitioning $X_1, ..., X_n$).
Then $\pi^{\text{Db}+}_{[n+1]}$ is obtained when a new customer has chosen an unoccupied table (remember we are in the rare type case), and $\pi^{\text{Db}++}_{[n+2]}$ is obtained when the $n+2$nd customer goes to the same table of the $n+1$st (suspect and crime stain have the same DNA type). In particular, thanks to \eqref{each}, we can write
\begin{equation}\label{e1}
p(\pi_{[n+2]}^{\text{Db++}}\mid H_p, \pi_{[n+1]}^{\text{Db+}}, \alpha, \theta)=1,
\end{equation} 
and 
\begin{equation}\label{e2}
p(\pi_{[n+2]}^{\text{Db++}}\mid H_d, \pi_{[n+1]}^{\text{Db+}}, \alpha, \theta)=\frac{1-\alpha}{n+1+\theta}
\end{equation}

 since the $n+2$nd customer goes to the same table of the $n+1$st where he seats alone.



\section{Some results}\label{gm}
This section presents some useful results that will be used in the forthcoming sections. In particular, Lemma~\ref{lemma1}, suitable to broader applications, is here applied to simplify the LR development. Then, some results from \citet{pitman:2006} regarding the two parameter Poisson Dirichlet distribution, are listed. 
\subsection{A useful Lemma}
The following lemma is a result regarding four general random variables $A$, $X$, $Y$, $H$ whose conditional dependencies are described by the Bayesian network of Figure~\ref{figure1}. The importance of this result is due to the possibility of applying it to a very common forensic situation: the prosecution and the defense disagree on the entirety of data ($Y$) but agree on a part of it ($X$) (indeed, as already noticed, defense and prosecution agree on the distribution of $\pi_{[n+1]}^{\text{Db+}}$, but not on the distribution of $\pi_{[n+2]}^{Db++}$). Data depends on parameters ($A$). 

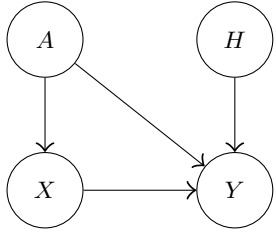
\begin{figure}[htbp]
  \begin{tikzpicture}
\node[draw, circle, minimum size=1cm]                (t) at (-1,0)  { $A$};
\node [draw, circle, minimum size=1cm]              (h) at (1.5,0) {$H$};
 \node [draw, circle, minimum size=1cm]              (d) at (-1,-2) { $X$};
\node [draw, circle, minimum size=1cm]              (dr) at (1.5,-2) { $Y$};
  \draw[black, big arrow]  (t) -- (d);
  \draw[black, big arrow] (h) -- (dr);     
  \draw[black, big arrow]  (t) -- (dr);
   \draw[black, big arrow] (d) -- (dr);
\end{tikzpicture}
\label{figure1}
     \caption{Conditional dependencies of the random variables of Lemma\ref{lemma1}}
     \end{figure}
  
   \begin{lemma}\label{lemma1}
Given four random variables $A$, $H$, $X$ and $Y$, whose conditional dependencies are represented by the Bayesian network of Figure~\ref{figure1}, the likelihood function for $h$, given $X=x$ and $Y=y$ satisfies 
$$\mathrm{lik}(h\mid x, y)~ \propto \mathbb{E}(p(y \mid x, A, h) \mid X = x).$$   \end{lemma}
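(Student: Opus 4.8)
The plan is to exploit the factorisation of the joint law that the Bayesian network of Figure~\ref{figure1} dictates, and then to recognise the stated expectation as a marginal likelihood up to a factor free of $h$. Reading off the graph, $A$ and $H$ are root nodes with no edge between them, $X$ has the single parent $A$, and $Y$ has parents $A$, $H$ and $X$. Hence the joint density factorises as
$$p(a, h, x, y) = p(a)\, p(h)\, p(x \mid a)\, p(y \mid x, a, h).$$
Two consequences of this factorisation do all the work: $A$ is independent of $H$, so that $p(a \mid h) = p(a)$, and $X$ is conditionally independent of $H$ given $A$, so that $p(x \mid a, h) = p(x \mid a)$. The latter is the d-separation statement that the only paths from $H$ to $X$ run through the collider $Y$, which is unobserved.

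First I would write out the likelihood for $h$. Since $\mathrm{lik}(h \mid x, y)$ is by definition proportional, as a function of $h$, to $p(x, y \mid h)$, I would integrate the parameter $A$ out and apply the two independences above:
$$p(x, y \mid h) = \int p(x, y, a \mid h)\, da = \int p(a)\, p(x \mid a)\, p(y \mid x, a, h)\, da.$$

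Next I would rewrite the target expectation. By Bayes' theorem the conditional density of $A$ given $X = x$ is $p(a \mid x) = p(a)\, p(x \mid a)/p(x)$, with $p(x) = \int p(a)\, p(x \mid a)\, da$; crucially this does not involve $h$, since neither $p(a)$ nor $p(x \mid a)$, and hence nor the normaliser $p(x)$, depends on $h$. Substituting gives
\begin{align*}
\mathbb{E}\bigl(p(y \mid x, A, h) \mid X = x\bigr)
&= \int p(y \mid x, a, h)\, p(a \mid x)\, da \\
&= \frac{1}{p(x)} \int p(a)\, p(x \mid a)\, p(y \mid x, a, h)\, da \\
&= \frac{p(x, y \mid h)}{p(x)}.
\end{align*}
Comparing with the previous display and discarding the factor $1/p(x)$, which is constant in $h$, yields $\mathbb{E}(p(y \mid x, A, h) \mid X = x) \propto p(x, y \mid h) \propto \mathrm{lik}(h \mid x, y)$, as claimed.

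The computation itself is routine; the only points that require care, and the place where the argument could silently go wrong, are the two independence claims read off the graph and, above all, the observation that the normaliser $p(x)$ of the conditional law of $A$ given $X = x$ carries no dependence on $h$. It is precisely this that lets a single conditional expectation over $A \mid X = x$, rather than a ratio of two such expectations (one per hypothesis), reproduce the entire $h$-dependence of the likelihood. I would therefore state the prior independence of $A$ and $H$ explicitly as the hypothesis under which the opening factorisation holds.
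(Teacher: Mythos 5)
Your proof is correct and follows essentially the same route as the paper's: both rest on the factorisation $p(a,h,x,y)=p(a)\,p(h)\,p(x\mid a)\,p(y\mid x,a,h)$, the Bayes inversion $p(a)\,p(x\mid a)=p(x)\,p(a\mid x)$, integration over $a$, and the observation that the normaliser $p(x)$ is free of $h$. The only difference is presentational: the paper narrates these steps as graphical-model operations (arrow reversal, conditioning on a root node), whereas you carry out the same computation directly with densities and make the two independence claims ($A\perp H$ and $X\perp H\mid A$) explicit, which is a welcome clarification rather than a different argument.
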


     \begin{proof}

The model of Figure~\ref{figure1} represents four variables $A$,  $H$, $X$ and $Y$ whose joint probabilty density can be factored as 
$$p(a, h, x, y) ~=~ p(a) \, p(x \mid a)\, p(h)\, p(y \mid x, a, h).$$

\noindent By Bayes formula, $p(a) \, p(x \mid a)= p(x)\,p(a\mid x)$. This rewriting corresponds to reversing the direction of the arrow between $A$ and $X$:
     
\bigskip     

\begin{center}
\begin{tikzpicture}
\node [draw, circle, minimum size=1cm]              (a) at (-1,0)  { $A$};
\node [draw, circle, minimum size=1cm]              (h) at (1.5,0) {$H$};
\node [draw, circle, minimum size=1cm]              (x) at (-1,-2) {$X$};
\node [draw, circle, minimum size=1cm]              (y) at (1.5,-2) { $Y$};

\draw [black, big arrow] (x) -- (a);
\draw [black, big arrow] (a) -- (y);     
\draw [black, big arrow] (h) -- (y);
\draw [black, big arrow] (x) -- (y);
\end{tikzpicture}
\end{center}
\noindent The random variable $X$ is now a root node. This means that when we probabilistically condition on $X=x$, the graphical model changes in a simple way: we can delete the node $X$, but just insert the value $x$ as a parameter in the conditional probability tables of the variables $A$ and $Y$ which formerly had an arrow from node $X$. The next graph represents this model:
\bigskip

\begin{center}
\begin{tikzpicture}
\node [draw, circle, minimum size=1cm]              (a) at (-1,0)  { $A$};
\node [draw, circle, minimum size=1cm]              (h) at (1.5,0) {$H$};
\node                                                                    (x1) at (-1,-1.5) { $x$};
\node                                                                    (x2) at (0,-2) { $x$};
\node [draw, circle, minimum size=1cm]              (y) at (1.5,-2) { $Y$};

\draw [black, dash arrow, dashed] (x1) -- (a);
\draw [black, big arrow] (a) -- (y);     
\draw [black, big arrow] (h) -- (y);
\draw [black, dash arrow, dashed] (x2) -- (y);
\end{tikzpicture}
\end{center}
\noindent This tells us, that conditional on $X=x$, the joint density of $A$, $Y$ and $H$ is equal to 
$$p(a\mid x) p(h) p(y \mid x, a, h).$$
The joint density of $H$ and $Y$ is obtained by integrating out the variable $a$. It can be expressed as a conditional expectation value, since $p(a\mid x)$ is the density of $A$ given $X=x$. We find:
$$p(h) \mathrm{E}(p(y \mid x, A, h) \mid X = x).$$

Recall that this is the joint density of two of our variables, $H$ and $Y$, after conditioning on the value $X=x$. Let us now also condition on $Y=y$. It follows that the density of $H$ given $X=x$ and $Y=y$ is proportional (as function of $H$, for fixed $x$ and $y$) to the same expression, $p(h) \mathrm{E}(p(y \mid x, A, h) \mid X= x)$. 

This is a product of the prior for $h$ with some function of $x$ and $y$. Since posterior odds equals prior odds times likelihood ratio, it follows that the likelihood function for $h$, given $X=x$ and $Y=y$ satisfies
$$\textrm{lik}(h\mid x, y)~ \propto \mathrm{E}(p(y \mid x, A, h) \mid X = x).$$
%
%
%
%

\end{proof}

   \begin{corollary}\label{lemma21}
Given four random variables $A$, $H$, $X$ and $Y$, whose conditional dependencies are represented by the network of Figure~\ref{figure1}, the likelihood ratio for $H=h_1$ against $H=h_2$ given $X=x$ and $Y=y$ satisfies 
         \begin{equation}\label{LR_fo}
         \text{LR}=\frac{\mathbb{E}(p(y|x, A, h_1)|X=x)}{\mathbb{E}(p(y|x, A, h_2)|X=x)}.
         \end{equation}
   \end{corollary}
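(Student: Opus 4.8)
The plan is to read the corollary off directly from Lemma~\ref{lemma1}, since the likelihood ratio is by definition the ratio of the likelihood function evaluated at the two competing values of the hypothesis. First I would recall the precise content of Lemma~\ref{lemma1}: the proportionality
$$\mathrm{lik}(h \mid x, y) \propto \mathbb{E}(p(y \mid x, A, h) \mid X = x)$$
holds as a statement about the dependence on $h$ alone, with $x$ and $y$ held fixed. Equivalently, there is a factor $c(x,y)$, not depending on $h$, such that $\mathrm{lik}(h \mid x, y) = c(x,y)\,\mathbb{E}(p(y \mid x, A, h) \mid X = x)$ for every admissible value of $h$.

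Next I would write the likelihood ratio for $H = h_1$ against $H = h_2$, given $X = x$ and $Y = y$, as the quotient of the likelihood function at these two values,
$$\text{LR} = \frac{\mathrm{lik}(h_1 \mid x, y)}{\mathrm{lik}(h_2 \mid x, y)},$$
and substitute the expression just recalled into both numerator and denominator. The decisive observation is that the same constant $c(x,y)$ appears in both, because it depends only on $(x,y)$ and not on the hypothesis; it therefore cancels, leaving exactly
$$\text{LR} = \frac{\mathbb{E}(p(y \mid x, A, h_1) \mid X = x)}{\mathbb{E}(p(y \mid x, A, h_2) \mid X = x)},$$
which is the claimed identity~\eqref{LR_fo}.

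There is essentially no obstacle here; the corollary is an immediate specialization of the lemma. The only point deserving a moment's care is to make explicit that the proportionality constant in Lemma~\ref{lemma1} is common to the two hypotheses, so that its cancellation in the ratio is legitimate. This is guaranteed by the formulation of the lemma, whose proof establishes the proportionality \emph{as a function of $H$, for fixed $x$ and $y$}. Once this is noted, taking the quotient completes the argument.
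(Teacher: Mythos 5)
Your proof is correct and matches the paper's intent exactly: the paper offers no separate proof of the corollary, treating it as an immediate consequence of Lemma~\ref{lemma1}, and your argument---taking the quotient of the likelihood function at $h_1$ and $h_2$ and cancelling the $h$-independent proportionality factor $c(x,y)$---is precisely the step the paper leaves implicit. Your remark that the lemma's proof establishes the proportionality as a function of $h$ for fixed $(x,y)$, which legitimizes the cancellation, is the right point of care and is consistent with the paper's own derivation.
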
  

\subsection{Known results about the two parameter Poisson Dirichlet distribution}
We will now list some theoretical results which will be useful in the forthcoming analysis. Most of these results can be found in \citet{pitman:2006}.

Denote as $K_{n}$ the random number of blocks of a partition $\Pi_{[n]}$ distributed according to the Pitman sampling formula with parameters $\alpha$ and $\theta$.
\begin{itemize}

\item It exists a positive random variable $S_{\alpha}$ such that  
\begin{equation}
\lim_{n\rightarrow +\infty} \frac{K_n}{n^\alpha} = S_{\alpha} \quad \text{a.s.}
\end{equation}
the distribution of $S_{\alpha}$ is a generalization of the Mittag Leffler distribution \citep{ gorenflo:2014}.

\item If $\boldsymbol{P} \sim \text{PD}(\alpha, \theta)$, then
\begin{equation} \label{eqz}
P_i \rightarrow Z i^{-1/\alpha}, \quad \text{a.s., when  } i\rightarrow +\infty
\end{equation} 
for a random variable $Z$ such that $Z^{-\alpha}= \Gamma(1-\alpha)/S_{\alpha}$.

\item For a fixed $\alpha \in (0,1)$, the PD($\alpha, \theta$) (for different $\theta$) are all mutually absolutely continuous. This means that $\theta$ cannot be consistently estimated for $\alpha$ in the range of interest. On the other hand, the power law behavior described above tells us that $\alpha$ can be consistently estimated. 

\item  Studying \eqref{eqch} one can see that when $n$ increases, the parameter $\theta$ becomes less and less important. However, it describes how much ``social'' are the customers: the smaller $\theta$ the more the customers tend to seat to already occupied tables. Thus, it determines the sizes of the big tables, but it won't be much important for our application (the more rare DNA types). 

\item Given $\Pi_{n}$ distributed according to Pitman sampling formula \eqref{ChEPPF}, it holds that 
\begin{equation}
\lim_{n\rightarrow +\infty} \frac{m_j(n)}{n^\alpha}= \frac{\alpha \Gamma (j-\alpha)}{\Gamma(1-\alpha)j!} S_{\alpha} \quad \text{a.s.}\, \forall j
\end{equation}
where $m_j(n)$, $j=1, ..., n$ the random number of blocks of the partition $\Pi_{[n]}$ of size $j$.  This result is presented in \citet{gnedin:2007}, based on \citet{karlin:1967}.

\end{itemize}

\section{The likelihood ratio}\label{lr}

The hypotheses of interest are:
\begin{itemize}
\item $H_p$ = The crime stain was left by the suspect.
\item $H_d$ = The crime stain was left by someone else.
\end{itemize}

The LR will thus be defined as $$\text{LR}= \frac{p(\pi_{[n+2]}^{\text{Db}++}|H_p)}{p(\pi_{[n+2]}^{\text{Db}++}|H_d)}= \frac{p(\pi_{[n+1]}^{\text{Db}+},\pi_{[n+2]}^{\text{Db}++}|H_p)}{p(\pi_{[n+1]}^{\text{Db}+},\pi_{[n+2]}^{\text{Db}++}|H_d)}.$$

where the last equality is due to the fact that $\Pi_{[n+1]}^{\text{Db}+}$ is deterministic, given $\Pi_{[n+2]}^{\text{Db}++}$.

Corollary~\ref{lemma21} with $A= (A,\Theta)$, $X= \Pi_{[n+1]}^{\text{Db}+}$, $Y= \Pi_{[n+2]}^{\text{Db}++}$, and $H=H$ allows to obtain the LR as:
   \[
\begin{aligned}\label{ffd}
\text{LR}~&=\frac{\mathbb{E}(p(\pi_{[n+2]}^{\text{Db}++}\mid \pi_{[n+1]}^{\text{Db}+}, A, \Theta, H_p ) \mid \Pi_{[n+1]}^{\text{Db}+}= \pi_{[n+1]}^{\text{Db}+}) }{ \mathbb{E}(p(\pi_{[n+2]}^{\text{Db}++}\mid \pi_{[n+1]}^{\text{Db}+}, A, \Theta, H_d ) \mid \Pi_{[n+1]}^{\text{Db}+}= \pi_{[n+1]}^{\text{Db}+}) }\\[11pt]
&=\frac{1}{\mathbb{E}\Big(\frac{1-A}{n+1+\Theta}\mid \Pi_{[n+1]}^{\text{Db}+}= \pi_{[n+1]}^{\text{Db}+}\Big)}.
\end{aligned}
\]
where the last equality is due to \eqref{e1} and \eqref{e2}.
By defining the random variable ${\displaystyle\Phi=n\frac{1-A}{n+1+\Theta}}$ we can write the LR as
\begin{equation}\label{ffd2}
\text{LR}=\frac{n}{\mathbb{E}(\Phi \mid \Pi_{[n+1]}=\pi_{[n+1]})}.
\end{equation}

 \subsection{True LR}\label{tlr}
 In order to evaluate the performance of this method one would like to compare the LR values obtained with \eqref{ffd2} with the `true' ones, meaning the LR values obtained when vector $\mathbf{p}$ is known, which means that we have the list of the frequencies of all the DNA types in the population of interest. The LR in this case can be obtained in the following way:
 \begin{align} \label{ddd}
 \text{LR}&=\frac{p(\pi_{[n+2]}^{\text{Db++}}|\pi_{[n+1]}^{\text{Db+}},H_p, \mathbf{p})}{p(\pi_{[n+2]}^{\text{Db++}}|\pi_{[n+1]}^{\text{Db+}},H_d, \mathbf{p})}=\frac{1}{p(\pi_{[n+2]}^{\text{Db++}}|\pi_{[n+1]}^{\text{Db+}}, H_d, \mathbf{p})}\\
&= \frac{1}{\Pr(X_{n+2}=X_{n+1}|\pi_{[n+1]}^{\text{Db+}}, H_d, \mathbf{p})}\\
&=\frac{1}{{\displaystyle \sum_{(x_1, ..., x_{n+1})} \Pr(X_{n+2}=x_{n+1}|x_1, ..., x_{n+1},\pi_{[n+1]}^{\text{Db+}}, H_d, \mathbf{p} })p(x_1, ..., x_{n+1}|\pi_{[n+1]}^{\text{Db+}}, \mathbf{p})}\\
&=\frac{1}{{\displaystyle \sum_{(x_1, ..., x_{n+1})}} p_{x_{n+1}} p(x_1, ..., x_{n+1}|\pi_{[n+1]}^{\text{Db+}}, \mathbf{p})}\\
&= \frac{1}{\mathbb{E}(p_{x_{n+1}}|\pi_{[n+1]}^{\text{Db+}}, \mathbf{p})}. \label{ddd2}
 \end{align}

Notice that in the rare type case $x_{n+1}$ is observed only once among the $x_1$, ..., $x_{n+1}$. Hence we call it a singleton. Let $N_1$ denote the number of singletons, and $\mathcal{S}$ the set of all singletons. 
Given $\mathbf{p}$ and $\pi_{[n+1]}^{\text{Db+}}$, it holds that the distribution of $X_{n+1}$ is the same as the distribution of all other $N_1$ singletons. This implies that:

$$
N_1 \mathbb{E}(p_{x_{n+1}}|\pi_{[n+1]}^{\text{Db+}}, \mathbf{p}) =  \mathbb{E}(\sum_{x_i \in \mathcal{S}}p_{x_{i}}|\pi_{[n+1]}^{\text{Db+}}, \mathbf{p}).
$$

Let us denote as $X_1^*$, .., $X^*_K$ the $K$ different values taken by $X_1$, ..., $X_{n+1}$,  ordered according to the frequency of their values. Stated otherwise, if $n_i$ is the frequency of $x_i^*$ among $x_1, ...,  x_{n+1},$ then $n_1\geq n_2\geq ... \geq n_K$. Moreover, in case $X_i^*$ and $X_j^*$ have the same frequency ($n_i=n_j$),  than they are ordered according to their values. For instance, if $X_1= 4$,  $X_2=2$, $X_3=4$, $X_4=2$, $X_5= 3$, $X_6=3$, $X_7=10$, $X_8=13$, $X_9=5$, $X_{10}=2$, $X_{11}=9$, then 
$X_1^*= 2, X_2^*=3, X_3^*=4, X_4^*=5, X_5^*=9, X_6^*=10, X_7^*=13$.

By definition, it holds that
$$\mathbb{E}(\sum_{x_i \in \mathcal{S}}p_{x_{i}}|\pi_{[n+1]}^{\text{Db+}}, \mathbf{p}) = \mathbb{E}(\sum_{j: \, n_j=1}p_{x_{j}^*}|\pi_{[n+1]}^{\text{Db+}}, \mathbf{p}).
$$

Notice that $(n_1, n_2, ..., n_K)$ is a partition of $n+1$, which will be denoted as $\pi_{n+1}^{\text{Db+}}$. In the example, $\pi_{n+1}^{\text{Db+}}=(3, 2, 2, 1, 1, 1, 1)$. A more compact representation for $\pi_{n+1}^{\text{Db+}}$ can be obtained by using two vectors $\mathbf{a}$ and $\mathbf{r}$ where $a_j$ are the
distinct numbers occurring in the partition, ordered, and each $r_j$ is the number of
repetitions of $a_j$. $J$ is the length of these two vectors, and it holds that $n+1=\sum_{j=1}^J a_j r_j.$
In the example above we have that $\pi_{n+1}^{\text{Db+}}=(\mathbf{a}, \mathbf{r})$ with $\mathbf{a}=(1,2,3)$ and $\mathbf{r}=(4, 2,1)$.

Since the distribution of ${\displaystyle \sum_{j: \, n_j=1}p_{x_{j}^*}}$ only depends on $\pi_{n+1}^{\text{Db+}}$, the latter can replace $\pi_{[n+1]}^{\text{Db+}}$. Thus, it holds that
\begin{equation}\label{lrp}
\textrm{LR}=\frac{N_1}{\mathbb{E}({\displaystyle \sum_{j: \, n_j=1}}p_{x_{j}^*}|\pi_{n+1}^{\text{Db+}}, \mathbf{p})}.
\end{equation}

Notice that the knowledge of $\mathbf{p}$, is not enough to observe $X_1^*, ..., X_K^*$: $\mathbf{p}$ is sorted in decreasing order, and even if we know the different values $p_i$, we don't know to which category each value belongs.
There is a function, $\chi$, treated here as latent variable, which assigns all DNA types, ordered according to their frequency in Nature, to one of the number $ \{1, 2, ..., J\}$ corresponding to the position in $\mathbf{a}$ of its frequency in the sample, or to $0$ if the type if not observed. Stated otherwise, $$ \chi: \{1, 2, ...\} \longrightarrow \{1, 2, ..., J\}.$$
$$\chi(i)=\begin{cases}
0 & \textrm{if the $i$th most common species in Nature is not observed in the sample},\\
j & \textrm{if the $i$th most common species in Nature is one of the $r_j$ observed $a_j$ times in the sample.}
\end{cases}$$
 
 Given $\pi_{n+1}^{\text{Db+}}=(\mathbf{a}, \mathbf{r})$, $\chi$ must satisfy the following conditions:
 \begin{equation}\label{condfi}
 \sum_{i=1}^{\infty} \mathbf{1}_{\chi(i)=j}=r_j, \qquad \forall j.
 \end{equation}

The map $\chi$ can be represented by a vector $\boldsymbol{\chi}=(\chi_1, \chi_2, ...)$ made of its values: $\chi_i=\chi(i)$. 
In the example above we have that $\chi=(0, 3, 2, 2, 1, 0, 0, 0, 1, 1, 0, 0, 1,0 ...0)$.

Notice that, given $\pi_{n+1}^{\text{Db+}}=(\mathbf{a},\mathbf{r})$, the knowledge of $\boldsymbol{\chi}$ implies the knowledge of $X_1^*$, ..., $X_K^*$: indeed it is enough to sort the positive values among the $\chi_i$ and take their positions in $\chi$ solving ties by considering the position themselves (if $\chi_i=\chi_j$, than the order is given by $i$ and $j$). 
For instance, in the example, if we sort the values of $\chi$ and we collect their positions we get
$(2, 3, 4, 5, 9, 10, 13)$: the reader can notice that we got back to the $X_1^*, ..., X_7^*$.

This means that to obtain the distribution of $X_1^*,..., X_{K}^*| \pi_{n+1}^{\text{Db+}}, \mathbf{p}$, which appears in~\eqref{lrp}, it is enough to obtain the distribution of $\boldsymbol{\chi}|\pi_{n+1}^{\text{Db+}}, \mathbf{p}$. Actually, we are only interested in the mean of the sum of singletons in samples of size $n+1$ from the distribution of $X_1^*,..., X_{K}^*| \pi_{n+1}^{\text{Db+}}, \mathbf{p}$: this means that we can just simulate samples from the distribution of $\boldsymbol{\chi}|\pi_{n+1}, \mathbf{p}$ and sum those $p_a$ such that $\chi_a=1.$

To simulate samples from the distribution of $\boldsymbol{\chi}|\pi_{n+1}, \mathbf{p}$ we use a Metropolis - Hasting algorithm, on the space of the vectors satisfying condition~\eqref{condfi}.
Notice that for the model we assumed $\mathbf{p}$ to be infinitely long, but for simulations we will use a finite $\mathbf{\bar{p}}$, of length $M$. This is equivalent to assume that only $M$ elements in the infinite $\mathbf{p}$ are positive, and the remaining infinite tail is made of zeros. 
Then the state space of the Metropolis Hasting Markov chain is made of all vectors of length $M$ whose elements belong to $\{0, 1, ..., J\}$, and satisfy the condition~\eqref{condfi}. 
If we start with a initial point $\boldsymbol{\chi}_{0}$ which satisfies~\eqref{condfi} and, at each allowed move of the Metropolish Hasthing , we swap two different values $\chi_a$ and $\chi_b$ inside the vector, condition \eqref{condfi} remains satisfied. The algorithm is based on a similar one proposed in~\citet{anevski:2013}. 

This method allows us to obtain the `true' LR when the vector $\mathbf{p}$ is known. This is rarely the case, but we can put ourselves in a fictitious world where we know $\mathbf{p}$, and compare the true values for the LR with the one obtained by applying our model when $\mathbf{p}$ is unknown. This will be done in the forthcoming section.

\section{Analysis on a real database}\label{ard}
In this section we present the study we made on a database of 18,925 Y-STR 23-loci profiles from 129 different locations in 51 countries in Europe \citep{purps:2014} \footnote{The database has previously been cleaned by Mikkel Meyer Andersen (\url{http://people.math.aau.dk/~mikl/?p=y23}).}.
Different analyses are performed by considering only 7 Y-STR loci (DYS19, DYS389 I, DYS389 II, DYS3904, DYS3915, DY3926,DY3937) but similar results have been observed with the use of 10 loci.

First the maximum likelihood estimators $\alpha_{\text{MLE}}$ and $\theta_{\text{MLE}}$ using the entire database are obtained. Their values are $\alpha_{\text{MLE}}=0.5$ and $\theta_{\text{MLE}}=216.$

In order to check if the choice of the two parameter Poisson Dirichlet prior is a sensible one we first compare the ranked frequencies from the database with the relative frequencies of several samples of size $n$ obtained from realisations of PD($\alpha_{MLE}, \theta_{MLE}$). The asymptotic behaviour described in \eqref{eqz} is also checked.
Lastly, we will analyse the loglikelihood function for data $\pi_{[n+1]}^{\text{Db}+}$ in order to study the denominator of the LR.

\subsection{Model fitting}
\begin{figure}[htbp]%
\includegraphics[scale=0.5]{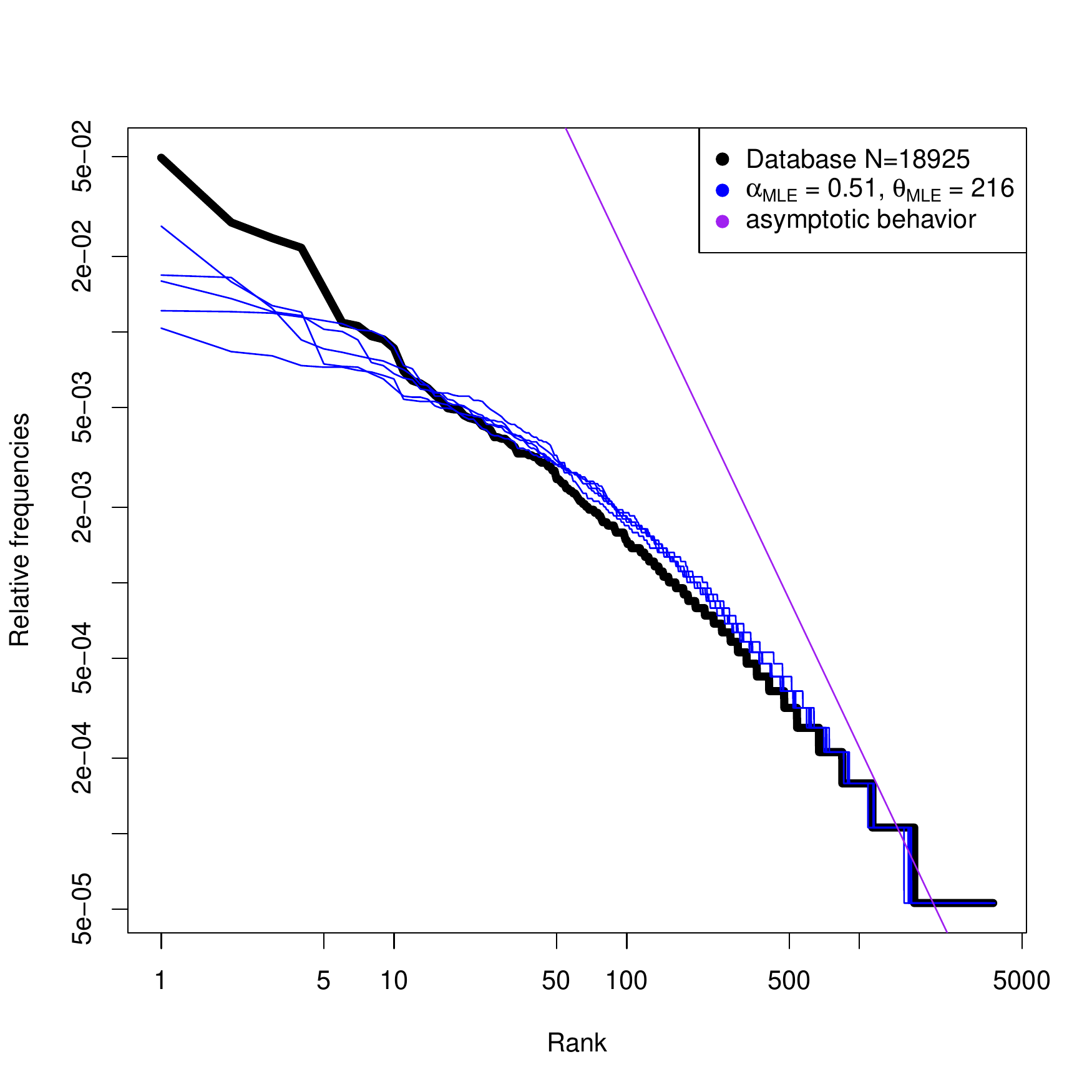}
\caption{Log scale ranked frequencies from the database (thick line) are compared to the relative frequencies of samples of size $n$ obtained from realization of PD($\alpha_{MLE}, \theta_{MLE}$) (thin lines). Asymptotic power law behavior is also displayed (dotted lines).}\label{3figs}
\end{figure}

In Figure~\ref{3figs}, the ranked frequencies from the database are compared to the relative frequencies of samples of size $n$ obtained from several realizations of PD($\alpha_{MLE}, \theta_{MLE}$).
To do so we run several times the Chinese Restaurant seating plan (up to $n=18,925$ customers): each run is equivalent to generate a new realization $\boldsymbol{p}$ from the PD($\alpha_{MLE}, \theta_{MLE}$). The partition of the customers into tables is the same as the partition obtained from an i.i.d. sample of size $n$ from $\boldsymbol{p}$. The ranked relative sizes of each table (thin lines)
are compared to the ranked frequencies of our database (thick line).

\subsection{Asymptotic power law behavior}
The asymptotic behavior described in \eqref{eqz} is also analyzed in Figure~\ref{3figs}.
This behavior is expected in the tail (the limit is over $i$) and clearly the number of customers ($n=18,925$) is not big enough for the small $P_i$ to follow the power law. 

\subsection{Loglikelihood}
It is also interesting to investigate the shape of the loglikelihood function for $\alpha$ and $\theta$ given $\pi_{[n+1]}^{\text{Db}++}$. It is defined as 

$$l_{n+1}(\alpha, \theta):=\log p(\pi_{[n+1]}^{\text{Db}++}| \alpha, \theta).$$

In Figure~\ref{2figs} (a), the loglikelihood function is compared to the Gaussian distribution centered in the maximum likelihood estimates for $\alpha$ and $\theta$, with the observed Fisher information as covariance matrix. 
In Figure~\ref{2figs} (b) the loglikelihood reparametrized using $\phi={\displaystyle n\frac{1-\alpha}{n+1+\theta}}$, and $\theta$ instead of $\alpha$ and $\theta$, is displayed and compared to the corresponding Gaussian distribution.

\begin{figure}[htbp]

\centering
\subfigure[Relative loglikelihood for parameters $\alpha$ and $\theta$, compared to a Gaussian distribution, $95\%$ and $99\%$ confidence intervals (green and red) ]{
\includegraphics[scale=0.5,width=0.45\textwidth ]{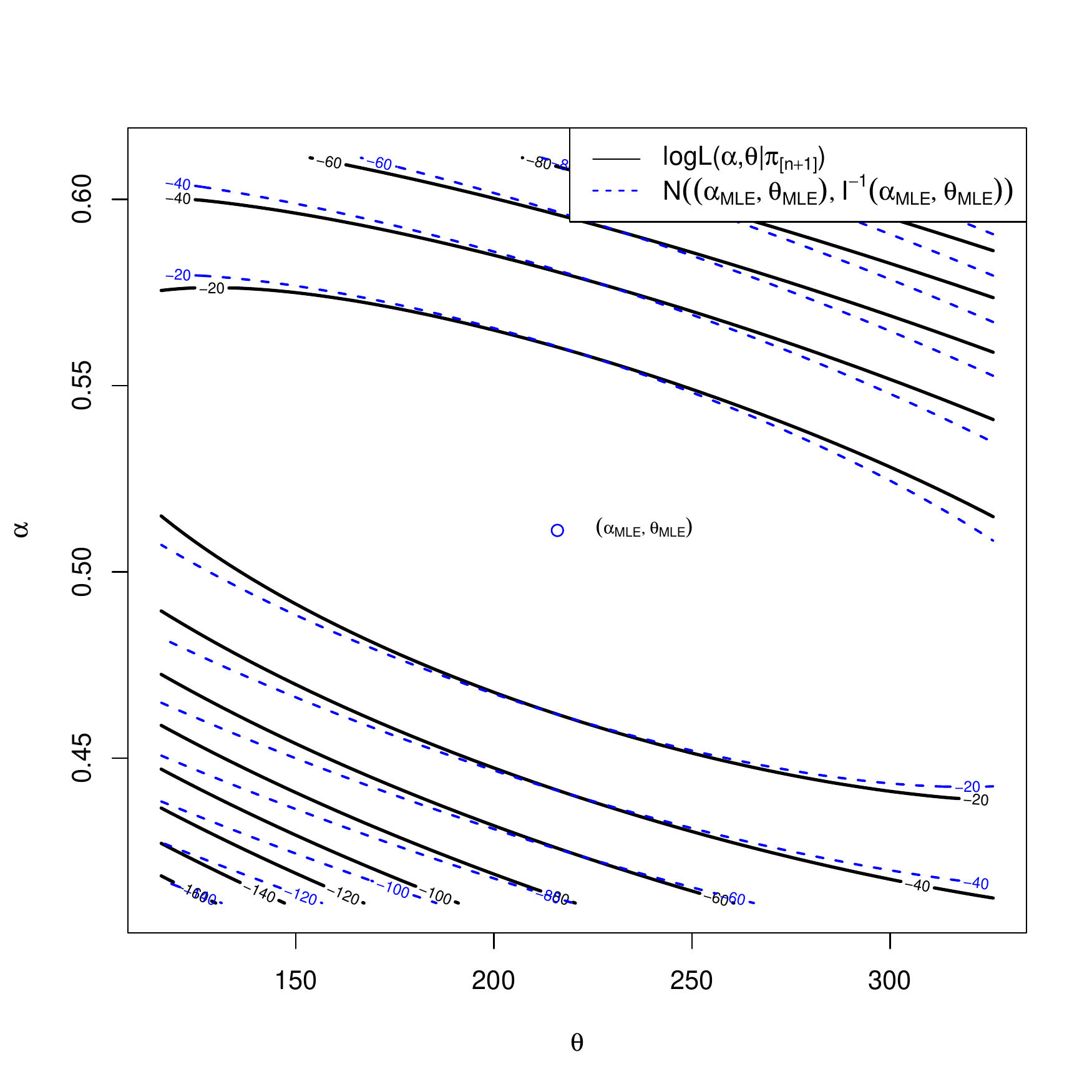}
}
\subfigure[Relative loglikelihood for $\phi=n\frac{1-\alpha}{ n+1+\theta}$ and $\theta$ compared to a Gaussian distribution $95\%$ and $99\%$ confidence intervals (green and red).]{
\includegraphics[scale=0.5,width=0.45\textwidth]{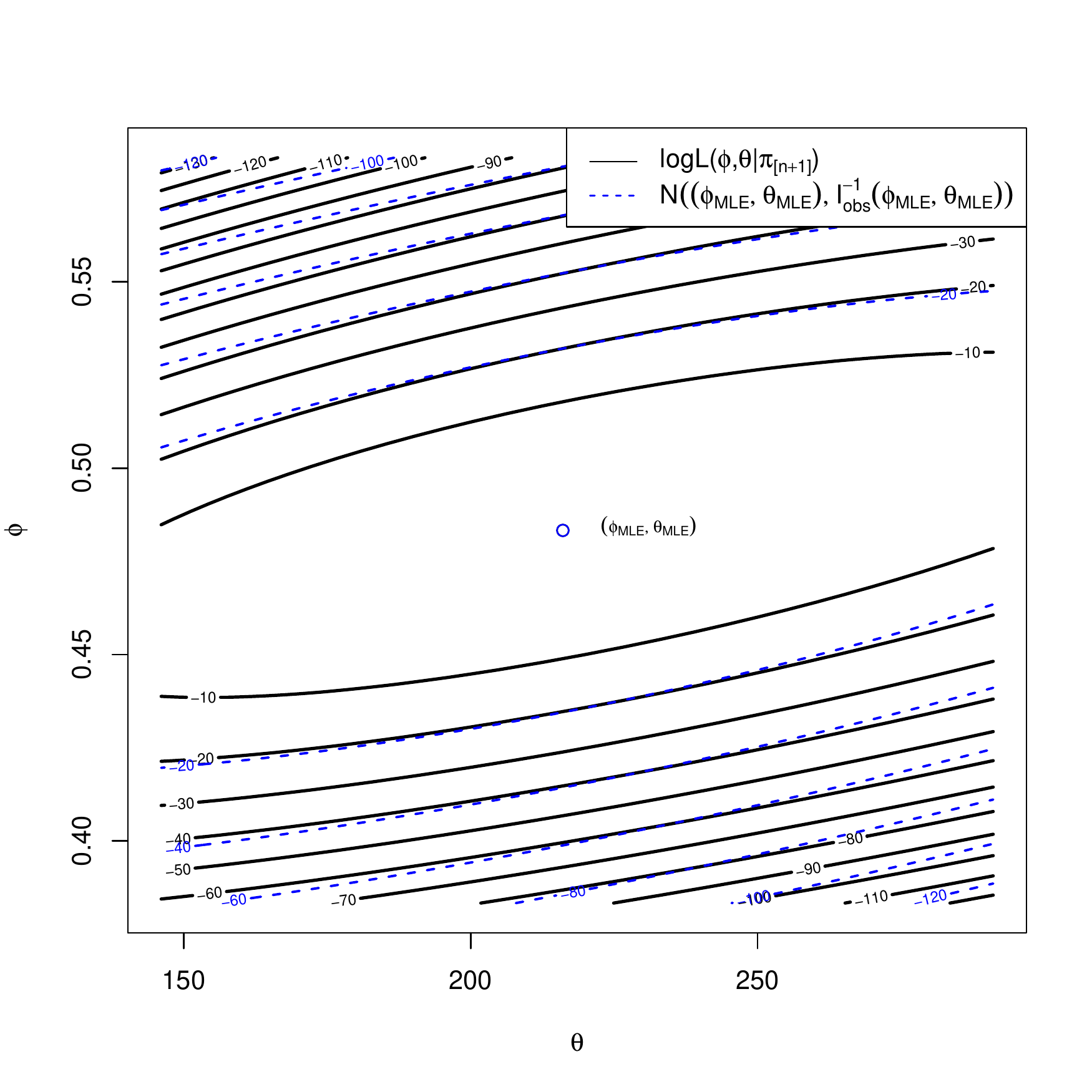}
}
\caption{}\label{2figs}
\end{figure}

The posterior distribution for ($\Phi$,$\Theta$) given $\Pi_{[n+1]}^\text{Db+}$ is proportional to the loglikelihood $l_{[n+1]}(\phi, \theta)$ times the prior $p(\phi, \theta)$. 
The Gaussian behavior of $l_{[n+1]}(\phi, \theta)$ is particularly interesting since it allows to conclude that if the prior $p(\phi, \theta)$ is smooth around $(\phi_{MLE}, \theta_{MLE})$, we can approximate $\mathbb{E}(\Phi|\Pi_{[n+1]}^{\text{Db}++})$ with $\phi_{MLE}={\displaystyle n\frac{1-\alpha_{MLE}}{n+1+\theta_{MLE}}}$. Hence, one can approximate the LR itself in the following way:
\begin{equation} \label{mlelr}
\text{LR}\approx \frac{n+1+\theta_{MLE}}{1-\alpha_{MLE}}.
\end{equation}

Notice that this is equivalent to an hybrid approach, in which the parameters are estimated through the MLE (frequentist) and their value plugged into the Bayesian LR.

\subsection{Analyzing the error}
As explained in Section~\ref{tlr}, a Metropolis Hasting algorithm, based on \citet{anevski:2013}, can be used to obtain the `true LR', that is the LR when the vector $\mathbf{p}$ is known, as defined in \eqref{ddd} - \eqref{ddd2}. The latter will be denoted as $\LR_{|\mathbf{p}}$.
This can be compared to the LR obtained with the method described in this paper, when $\mathbf{p}$ is unknown, as defined in \eqref{ffd} - \eqref{ffd2}. 
Notice that errors appear at different levels \citep{cereda:2015b}: error due to limitedness of samples, error in the model, error  in the choice of the parameters of the model. The following three tests will explore these levels.

\subsubsection*{Test 1}
Instead of using the big database of \citet{purps:2014} we consider its restriction to the Dutch population 
(of size 2085): we pretend this to be the entire population of possible perpetrators, and compare the distribution of $\log_{10}(\LR_{|\mathbf{p}})$ and $\log_{10}\LR$ obtained by 100 samples of size 100 from this population.

\subsubsection*{Test 2}
In order to avoid error due to model selection, we use as entire population a sample from a realization from PD($\alpha, \theta$) distribution. This is done by running the two parameter Chinese restaurant process up to 2085 customers. Then, again, 100 samples of size 100 from this population are sampled and $\log_{10}(\LR_{|\mathbf{p}})$ and $\log_{10}\LR$ are compared. This procedure is repeated 5 times to use 5 different Poisson Dirichlet populations. 

\subsubsection*{Test 3}
The same as in Test 2, but in order to avoid also error due to MLE parameter estimations, we use as parameters $\alpha$ and $\theta$ for $\log_{10}\LR$ exactly those which have been used to generate the Chinese restaurant process. 

\begin{figure}[htbp]

\medskip
\hspace{0.35\baselineskip}\hfil
\makebox[0.4\textwidth]{(a) Comparison}\hfil
\makebox[0.4\textwidth]{(b) Error}\hfil

\settoheight{\tempdim}{\includegraphics[width=0.3\textwidth]{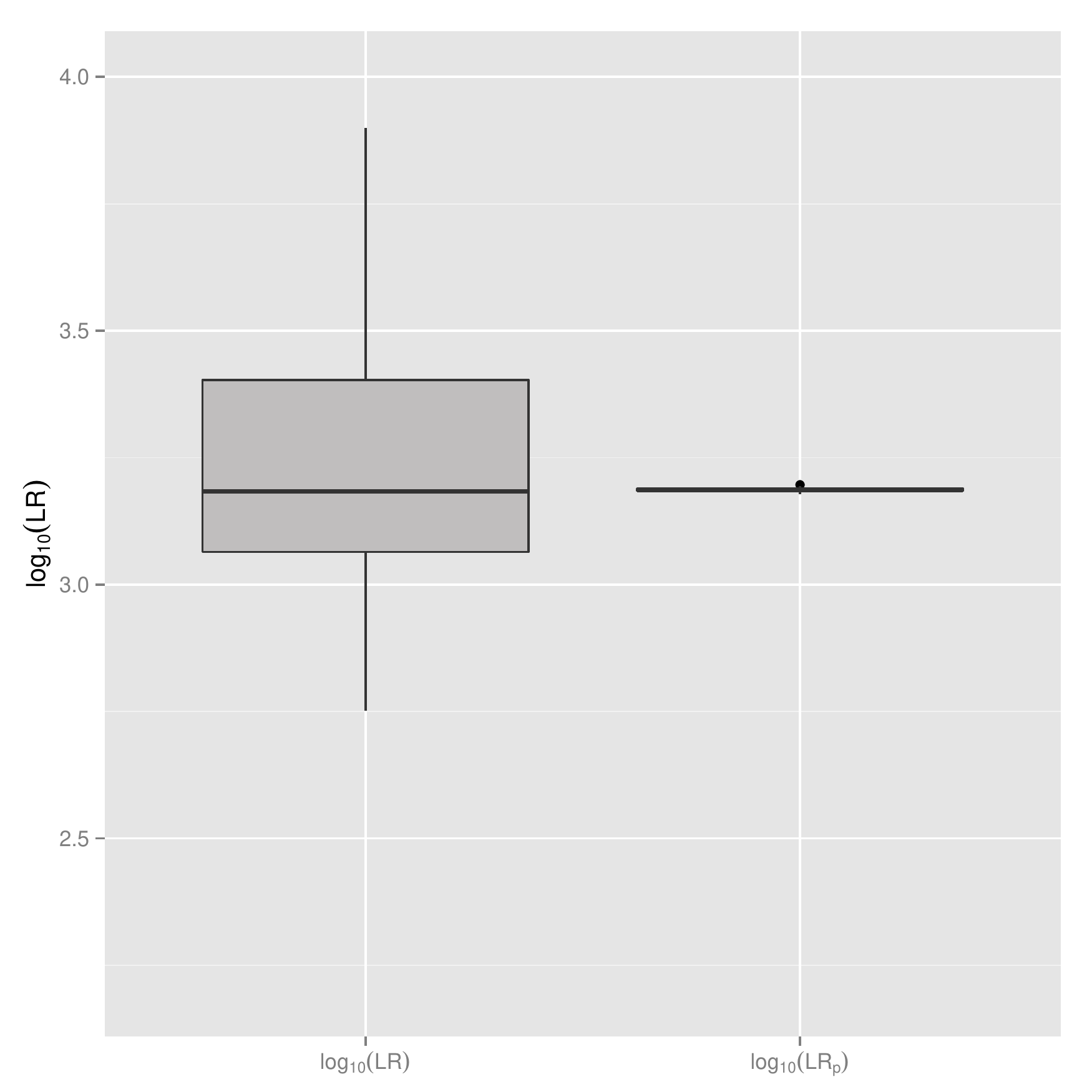}}%
\rotatebox{90}{\makebox[\tempdim]{Test 1}}\hfil
\includegraphics[width=0.4\textwidth]{test1.pdf}\hfil
\includegraphics[width=0.4\textwidth]{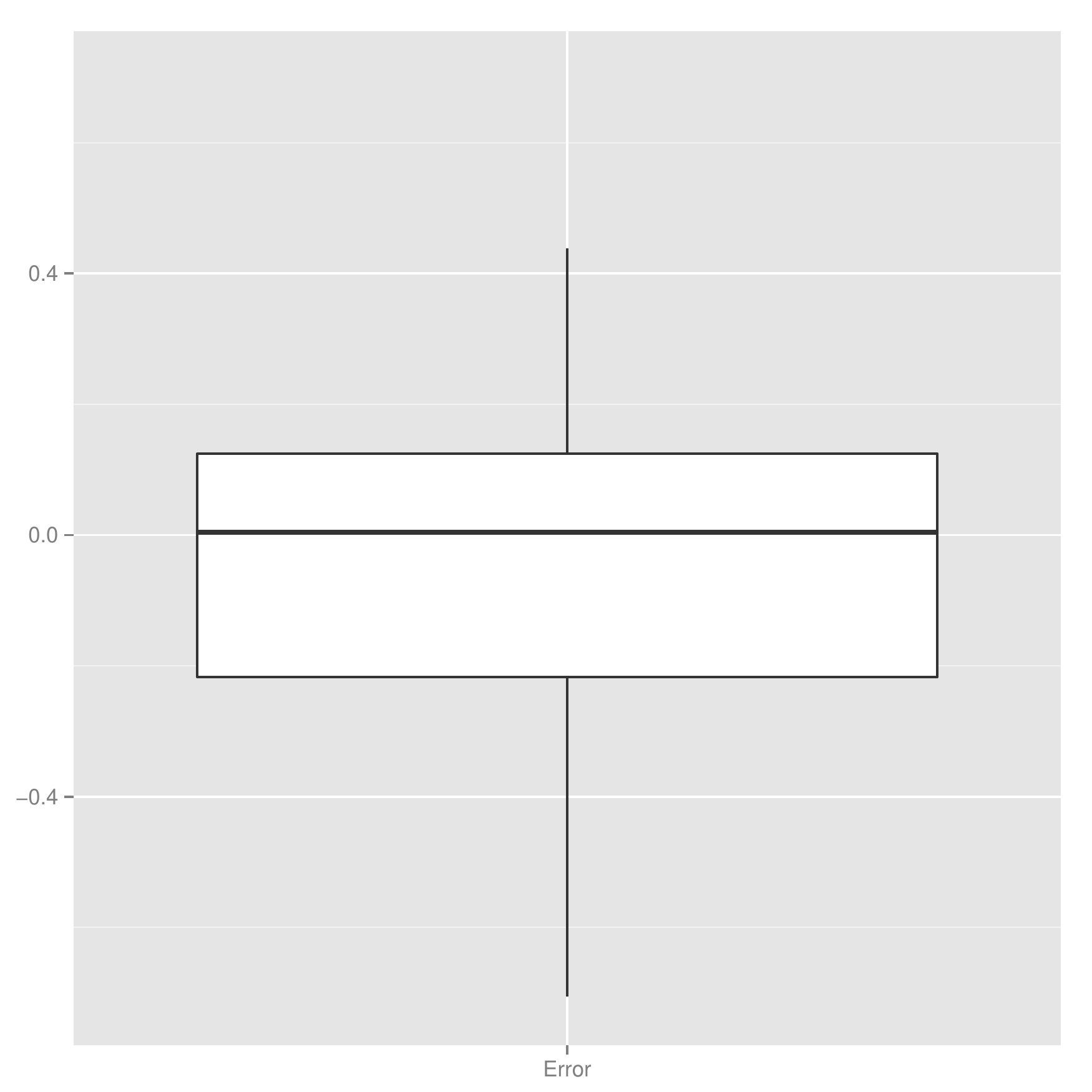}\hfil

\settoheight{\tempdim}{\includegraphics[width=0.3\textwidth]{test1.pdf}}%
\rotatebox{90}{\makebox[\tempdim]{Test 2}}\hfil
\includegraphics[width=0.4\textwidth]{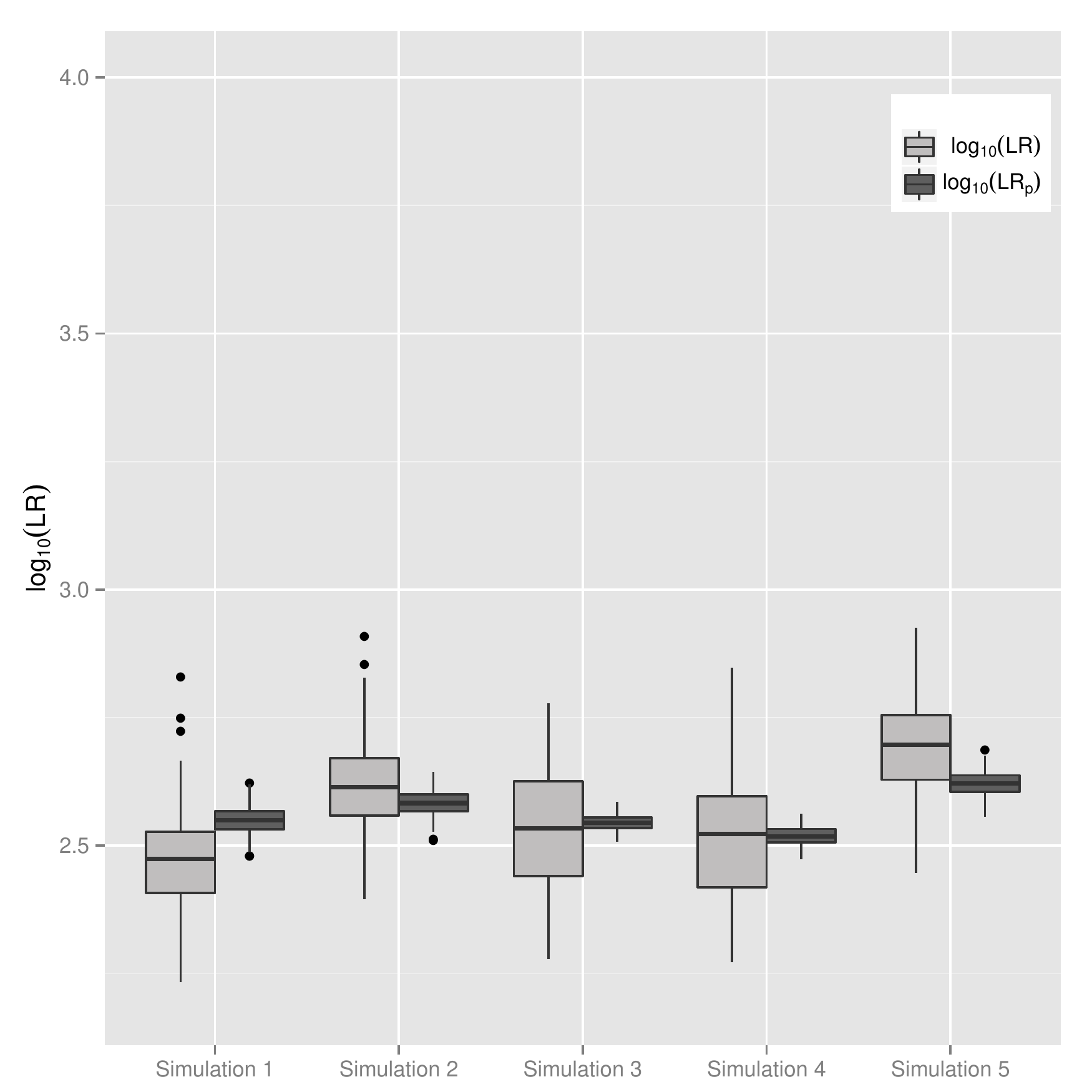}\hfil
\includegraphics[width=0.4\textwidth]{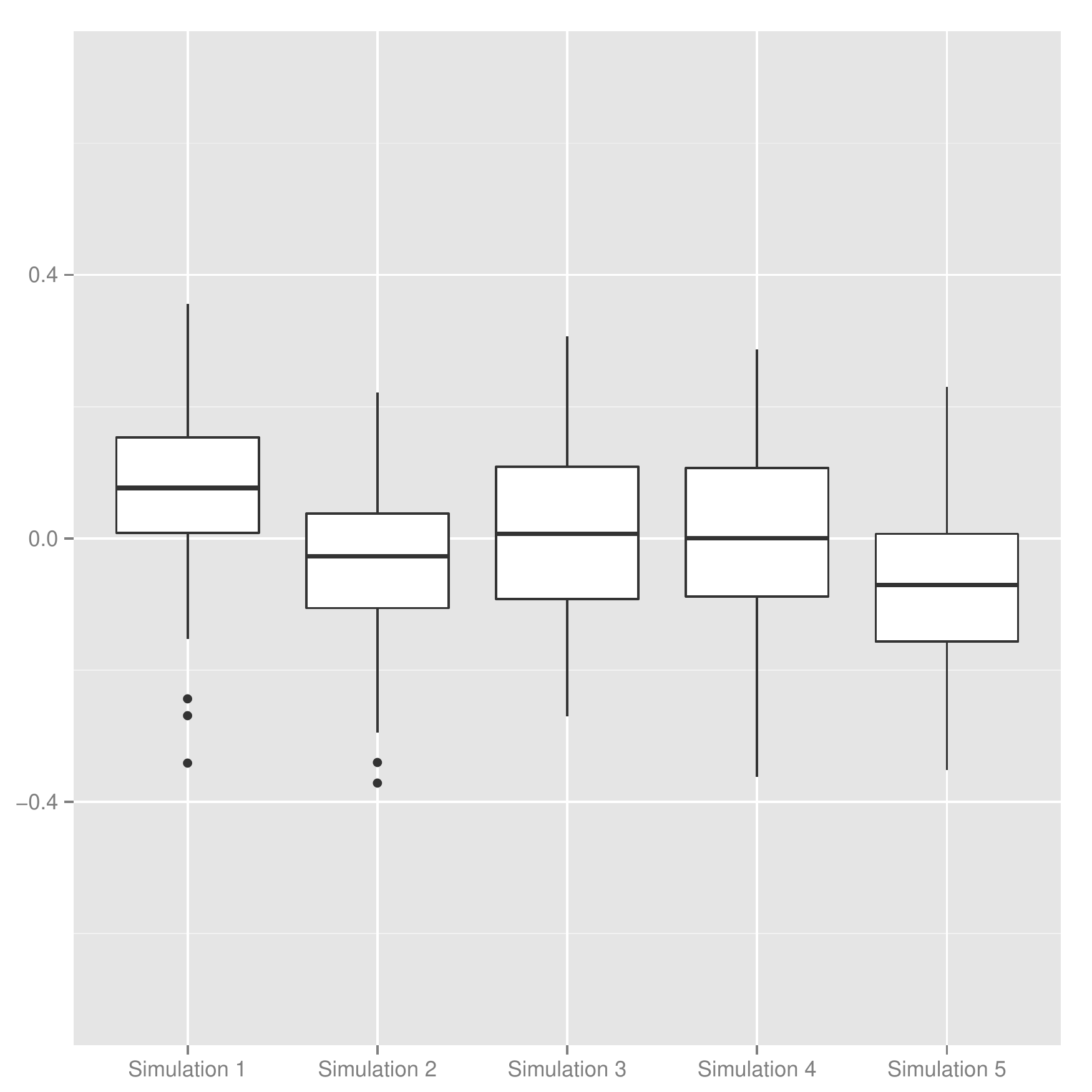}


\settoheight{\tempdim}{\includegraphics[width=0.3\textwidth]{test1.pdf}}%
\rotatebox{90}{\makebox[\tempdim]{Test 3}}\hfil
\includegraphics[width=0.4\textwidth]{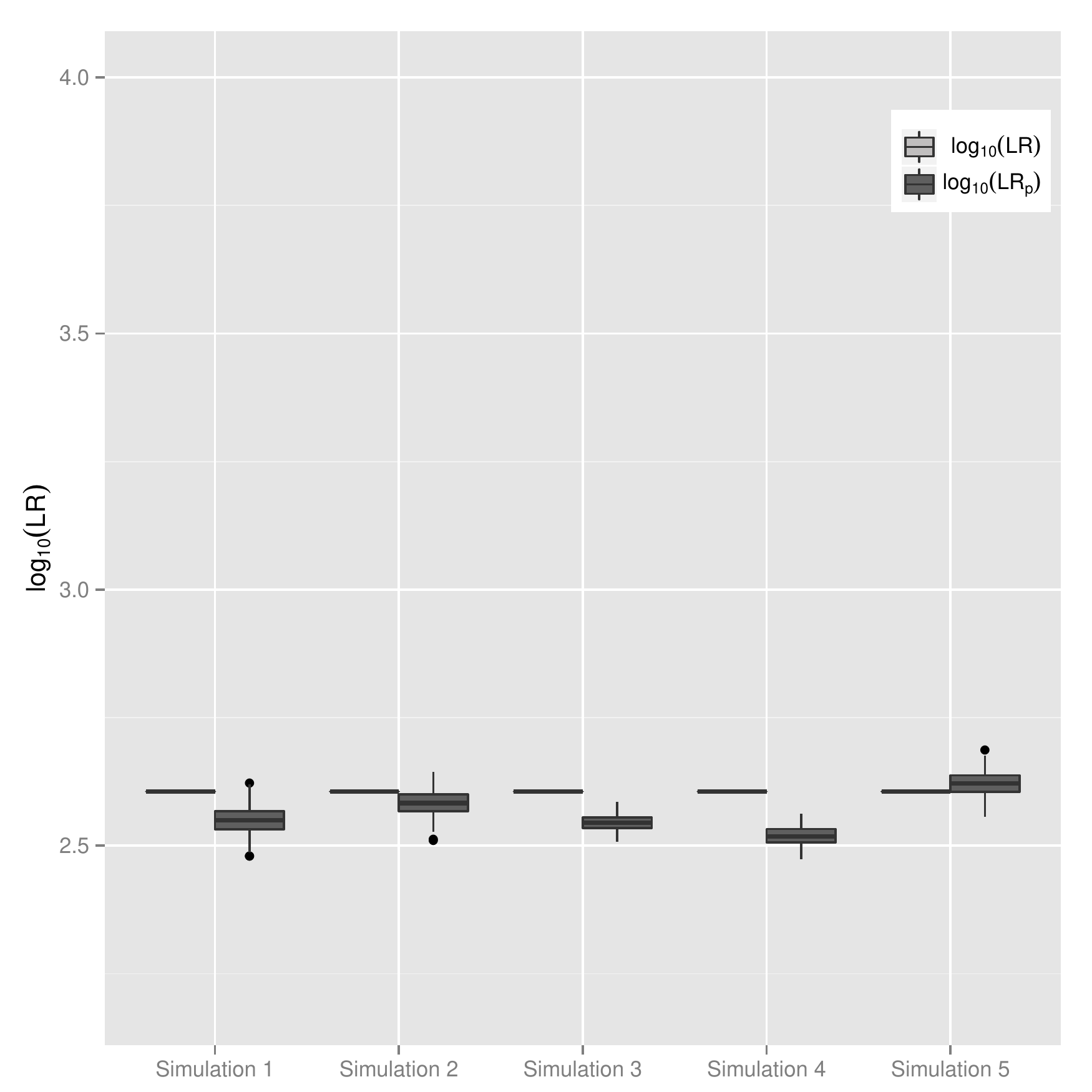}\hfil
\includegraphics[width=0.4\textwidth]{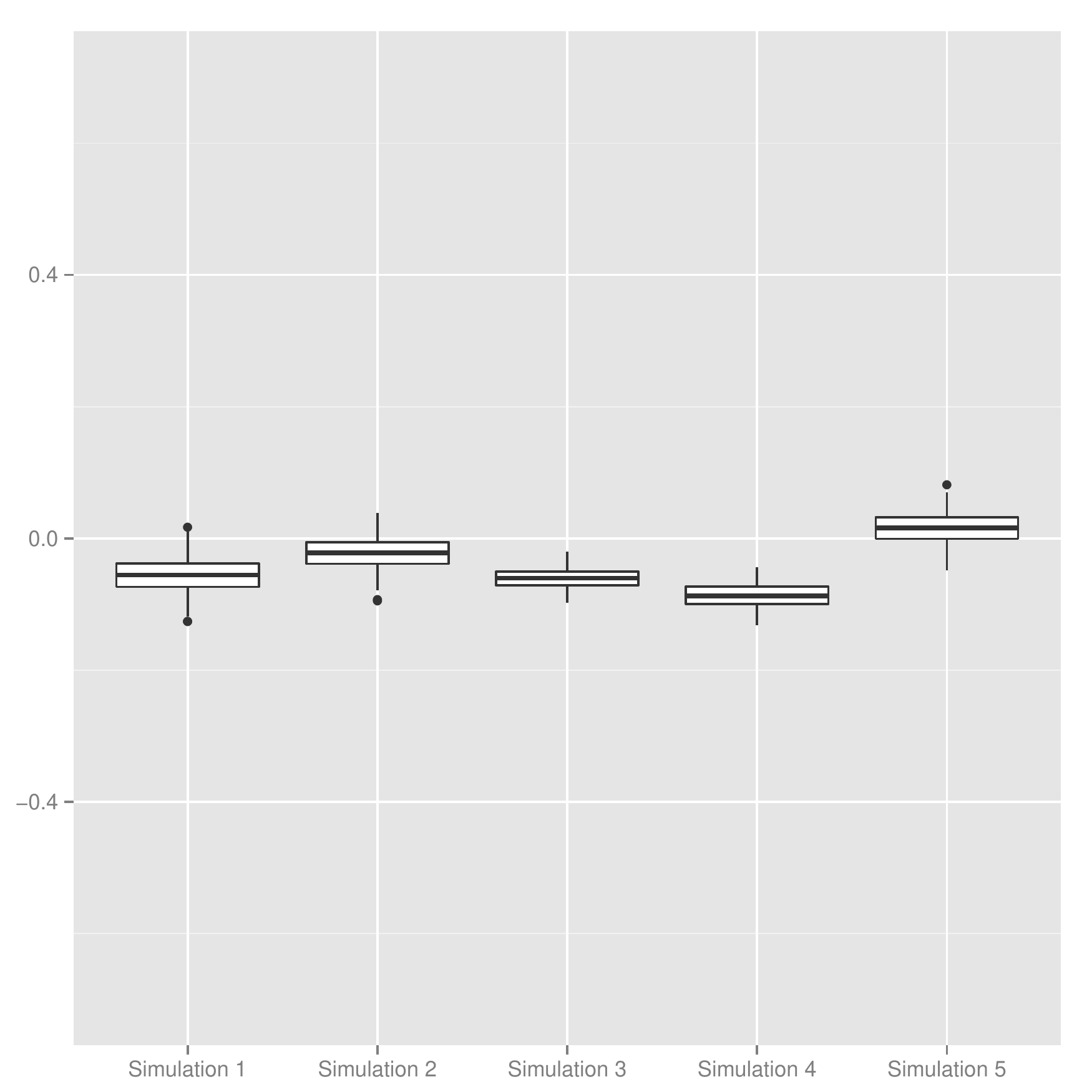}\hfil


\caption{For Test 1, Test 2 and Test 3 we plot: (a) comparison between the distribution of $\log_{10}LR$  when the population proportions $\mathbf{p}$ are known (LR$_{|\mathbf{p}}$), and when they are not (LR). (b) the error $\log_{10}\LR_{|\mathbf{p}}-\log_{10}LR $.}\label{figggag3}
\end{figure}

\section{Future research questions}~\label{frq}

Because of the mutual absolute continuity results we know that $\theta$ cannot be consistently estimated.
However, there exists at least one consistent estimator for $\alpha$ \citep{carlton:1999}, namely:
$$\hat{\alpha}= \frac{\log K_n }{\log n}.$$
Moreover, $\alpha$ can be estimated consistently also from the power law of \eqref{eqz}.
We are interested in the consistency of $\alpha_{MLE}$, at least when $\theta$ is known, although literature \citep{carlton:1999} is quite skeptical about its  performance.
The observed Fisher information for $\alpha$ grows with $n$ and this gives some hope for the consistency of $\alpha_{MLE}$.

The Gaussian behavior of Figure~\ref{2figs} was unexpected. At least, we expect that increasing $n$, $\alpha$ and $\theta$ would become independent, thus the ellipses will rotate.

\section*{Acknowledgment}
I am indebted to Jim Pitman and Alexander Gnedin for their help in understanding their important theoretical results, and to Mikkel Meyer Andersen for providing a cleaned version of the database of \citet{purps:2014}.
This research was supported by the Swiss National Science Foundation, through grants no.\ 105311-144557 and 10531A-156146, and carried out in the context of a joint research project, supervised by Franco Taroni (University of Lausanne, Ecole des sciences criminelles), and Richard Gill (Mathematical Institute, Leiden University).

%
%

\end{document}